\definecolor{blueprl}{RGB}{46,48,146}
\newtheorem{proposition}{Proposition}
\newcommand{\bra}[1]{\mbox{$\langle #1 |$}}
\newcommand{\ket}[1]{\mbox{$| #1 \rangle$}}
\newcommand{\tr}{\mbox{Tr}}
\begin{document}

\title{Multipartite Gaussian Entanglement of Formation}

\author{Sho Onoe}
\email{sho.onoe@uqconnect.edu.au}
\affiliation{Centre for Quantum Computation and Communication Technology, School of Mathematics and Physics, The University of Queensland, St. Lucia, Queensland, 4072, Australia}
\author{Spyros Tserkis}
\affiliation{Centre for Quantum Computation and Communication Technology, Department of Quantum Science, Australian National University, Canberra, ACT 2601, Australia.}
\author{Austin P. Lund}
\author{Timothy C. Ralph}
\affiliation{Centre for Quantum Computation and Communication Technology, School of Mathematics and Physics, The University of Queensland, St. Lucia, Queensland, 4072, Australia}

\begin{abstract}
Entanglement of formation is a fundamental measure that quantifies the entanglement of bipartite quantum states. This measure has recently been extended into multipartite states taking the name $\alpha$-entanglement of formation. In this work, we follow an analogous multipartite extension for the Gaussian version of entanglement of formation, and focusing on the the finest partition of a multipartite Gaussian state we show this measure is fully additive and computable for 3-mode Gaussian states.
\end{abstract}

\maketitle

\section{Introduction}

Entanglement is a property of quantum mechanics that allows correlations beyond the classical limit. As such, it is considered a crucial resource that allows certain quantum protocols to be more efficient than their classical counterpart \cite{horodecki2009quantum}. Several entanglement measures have been defined in the literature \cite{horodecki2009quantum,plenio2014introduction}, however in general the quantification of their values is a challenging task.

Bipartite entanglement of formation (EoF) \cite{bennett1996mixed} is defined as the least expected amount of bipartite entropy of entanglement (EoE) required to create a state. In general, the quantification of bipartite EoF involves a minimization over infinite degrees of freedom, making it hard to compute. Initial research focused on simple systems such as the 2-qubit system \cite{hill1997entanglement,wootters1998entanglement}, which led to analytical expressions for the measure. 

An analogous measure, called Gaussian EoF (GEoF), focusing only on Gaussian states and operations, was defined by Wolf et al. \cite{wolf2004gaussian}. A few years later, this measure was proven to be equal to EoF in the case of 2-mode Gaussian states \cite{akbari2015entanglement,Wilde.PRA.18}. For these types of states, several efficient numerical methods and analytical expressions have been derived  \cite{marian2008entanglement,wolf2004gaussian,Tserkis.Ralph.PRA.17,tserkis2019quantifying}. Recently, in Ref.~\cite{szalay2015multipartite}, Szalay introduced a measure referred to as $\alpha$-EoF, which is the multipartite extension of bipartite EoF. In this paper, we follow Wolf's approach and apply the notion of $\alpha$-EoF onto the Gaussian regime. We show that $\alpha$-GEoF is a computable multipartite entanglement measure. We utilize a special case of $\alpha$-GEoF, which we refer to as N-mode GEoF, to quantify the \textit{total entanglement} in a 3-mode Gaussian system, in the sense that it is the sum of the entanglement of all internal partitions of the state.

Our paper is set out in the following way. In section II, we introduce the conventions adopted in this paper. In section III we review bipartite entanglement measures. In section IV, we review $\alpha$-entanglement measures \cite{szalay2015multipartite} and introduce a special subset, referring to it as N-mode EoF. In section V, we apply $\alpha$-entanglement measures to the Gaussian regime and prove N-mode EoF is fully additive. In section VI, we consider the tripartite case and compute the total entanglement for simple cases. We summarize and conclude our results in section VII.
\section{Preliminaries}

\subsection{Modes, Partitions and Sub-Systems}

In the discrete variable case, the smallest sub-systems are referred to as qudits (or qubits for 2-level systems). In the continuous variable case, the smallest sub-systems are referred to as modes. For simplicity, this paper will be utilizing the terminology mode, but in this context it can be used interchangeably with qudits if we are not considering the case of Gaussian states. 

Let us consider an $N$-mode state $\hat{\rho}$. The state of the $n$th mode $\hat{\rho}_n$ can be found via the partial trace over all other modes:
\begin{equation}
\hat{\rho}_{n}\equiv \tr_{\forall i\neq n}(\hat{\rho}) \,.
\end{equation}
$\hat{\rho}$ can be split into $M$ partitions, via assigning each mode into one of the $M$ partitions (where $N \geqslant M$). By doing this, we introduce $M$ sub-systems, denoted $\{s_1,s_2,...,s_M\}$. This defines the M-partitioning, $\alpha=s_1|s_2|...|s_M$. Each sub-system $s_j$ is defined as the reduced state, achieved through the partial trace over all other sub-system, i.e.,
\begin{equation}
\hat{\rho}_{s_n}\equiv \tr_{\forall s_i\neq s_n}(\hat{\rho}) \,.
\end{equation}

\subsection{von Neumann Entropy}

Before we get into the quantification of entanglement, we need to first define a function that a broad family of entanglement measures are based on, i.e., quantum entropy \cite{Neumann1927, bravyi2003entanglement, ohya2004quantum}. In particular, we focus on the von Neumann entropy, which for a state $\hat{\rho}$ is defined as
\begin{equation}
S(\hat{\rho}) \equiv - \tr (\hat{\rho} \ln \hat{\rho}) \,.
\end{equation}
$S(\hat{\rho})$ is a symmetric, basis-independent function, which vanishes for pure states. Also, note that it is fully additive for non-correlated states (although sub-additive in general), i.e.,
\begin{equation}
S\left(\hat{\rho}_{s_1} \otimes \hat{\rho}_{s_2} \right) =S(\hat{\rho}_{s_1})+S(\hat{\rho}_{s_2}) \,,
\end{equation}
and convex
\begin{equation}
S \left(\sum_j p_j \hat{\rho}_j \right) \geqslant \sum_j p_j S(\hat{\rho}_j) \,.
\end{equation}

\subsection{Gaussian States}

In the later part of this paper, we will be considering quantum systems comprised of bosonic Gaussian {modes}, $\hat{a}_n$ \cite{welsch1999ii, weedbrook2012gaussian, adesso2014continuous, serafini2017quantum}. These bosonic annihilation operators satisfy the bosonic commutation relations $[\hat{a}_n,\hat{a}_m^{\dag}]=\delta^n_m$, where $\delta$ is the Kronecker delta. For Gaussian states, the analysis of first and second moment \cite{weedbrook2012gaussian} is sufficient to characterize the Wigner function of a particular output mode \cite{scully1999quantum}. The first moment of an $N$-mode Gaussian state is fully characterized by its $2N$-dimensional displacement vector, $\vec{D}$. The second order moment is described by its $2N \times 2N$ real symmetric covariance matrix \cite{wang2007quantum}, $\boldsymbol{\sigma}$. As a result, all Gaussian states can be written as $\hat{\rho}_{\boldsymbol{\sigma},\vec{D}}$.

The $i$th element of the displacement vector is defined in the following way:
\begin{gather}
\vec{D}_{i} \equiv \tr (\rho \hat{R}_i) \,,
\end{gather}
where
\begin{gather}
\vec{R} \equiv(\hat{q}_{1},...,\hat{q}_{N},\hat{p}_{1},...,\hat{p}_{N})^{T} \,,
\end{gather}
and we have defined $\hat{q}_n\equiv \hat{a}_n+\hat{a}_n^{\dag}$ and $\hat{p}_n\equiv \hat{a}_n-\hat{a}_n^{\dag}$. The $\{i,i'\}$th element of the covariance matrix $\boldsymbol{\sigma}$ is defined in the following way:
\begin{equation}
\sigma_{ii'} \equiv \tr \left[\hat{\rho}(\hat{R}_i\hat{R}_{i'}+\hat{R}_{i'}\hat{R}_i)\right]-2\tr \left[\hat{\rho} \hat{R}_i\right] \tr \left[\hat{\rho} \hat{R}_{i'}\right] \,.
\end{equation}

\section{Bipartite Entanglement Measures}

\subsection{Bipartite Entropy of Entanglement}

EoE, $E_{s_1|s_2}$, is the typical way to quantify bipartite entanglement in pure states, $\hat{\psi} := \ket{\psi}\bra{\psi}$ \cite{bennett1996concentrating}. This measure is given by the von Neumann entropy of the reduced state:
\begin{equation}
E_{s_1|s_2}(\hat{\psi}) \equiv S\left[\tr_{s_2}(\hat{\psi})\right]\,.
\end{equation}
As $\hat{\psi}$ is a pure state, EoE is invariant under permutations, i.e., $E_{s_1|s_2}(\hat{\psi}) = E_{s_2|s_1}(\hat{\psi})$. This is a reliable bipartite entanglement measure as it satisfies the following postulates \cite{horodecki2009quantum, plenio2014introduction}:
\begin{enumerate}
	\item $E_{s_1|s_2}$ is an indicator function for separability between the subsystem $s_1$ and $s_2$;
\begin{equation}
E_{s_1|s_2}(\hat{\psi})=0 \Leftrightarrow \hat{\psi}=\hat{\psi}_{s_1} \otimes \hat{\psi}_{s_2} \,.
\end{equation}
 \item $E_{s_1|s_2}$ is non-increasing on average under local operations and classical communications (LOCC), $\hat{\Lambda}_{s_1|s_2}$, where the locality is defined in terms of the sub-system $s_1$ and $s_2$ \cite{bennett1996mixed, vidal2000entanglement, horodecki2001separability, chitambar2014everything, eltschka2014quantifying, plenio2014introduction};
\begin{equation}
E_{s_1|s_2}(\hat{\psi}) \geqslant \sum_j p_j E_{s_1|s_2}\left[\hat{\Lambda}_{j,s_1|s_2} (\hat{\psi})\right],
\end{equation} 
where
\begin{equation}
\hat{\Lambda}_{s_1|s_2} (\hat{\psi}) = \sum_j p_j \hat{\Lambda}_{j,s_1|s_2} (\hat{\psi}) \, ,
\end{equation}
are pure LOCC sub-operations \cite{horodecki2005simplifying, szalay2015multipartite}.

\end{enumerate}

\subsection{Bipartite Entanglement of Formation}

A natural way to extend an entanglement measure to mixed states is via the convex-roof extension \cite{uhlmann2010roofs, bennett1996mixed, uhlmann2000fidelity, rothlisberger2009numerical}. EoF is defined as the convex-roof extension of EoE:
\begin{equation}
\mathcal{E}_{\text{F},s_1|s_2}(\hat{\rho}) \equiv \inf_{\hat{\rho}=\sum_j p_j \hat{\psi}_{j}}\left\{\sum_j p_j E_{s_1|s_2}(\hat{\psi}_{j}) \right\} \,,
\end{equation}
where ``$\inf$'' becomes a ``$\min$'' for discrete variable states, and the sum can be replaced with an integral when considering a continuum of pure states.

This is a reliable bipartite entanglement measure as it satisfies the mixed state extension of the aforementioned postulates \cite{szalay2015multipartite} and an extra one, i.e.,
\begin{enumerate} \setcounter{enumi}{2}
	\item for pure states $\mathcal{E}_{\text{F},s_1|s_2}$ reduces to the entropy of entanglement, i.e.,
\begin{equation}
\mathcal{E}_{\text{F},s_1|s_2}(\hat{\psi})={E}_{s_1|s_2}(\hat{\psi}) \,.
\end{equation}
\end{enumerate}
As von Neumann entropy is convex, postulate 2 implies that bipartite EoF is also non-increasing under LOCC; $\mathcal{E}_{\text{F},s_1|s_2}(\hat{\rho})\geq \mathcal{E}_{\text{F},s_1|s_2}\left[\hat{\Lambda}_{s_1|s_2}(\hat{\rho})\right]$.
\section{M-Partite Entanglement Measures}

\subsection{$\alpha$ - Separability}

Entanglement can also exist among several partitions. There are several ways to divide an $N$-mode system into $M$ partitions. To make a distinction between the partitioning, Szalay \cite{szalay2015multipartite} introduced a hierarchy of separability classes. A pure state, $\ket{\psi}_{\alpha}$, is called ``$\alpha$-separable'' when
\begin{equation}
\ket{\psi}_{\alpha} \equiv \bigotimes_{s_i \in \alpha}\ket{\psi_{s_i}} \,.
\end{equation}

For example, a pure five-mode state is $1|23|45$-separable if and only if the state can be written in the following way
\begin{equation}
\ket{\psi}_{1|23|45} = \ket{\psi_{1}} \otimes \ket{\psi_{23}}\otimes \ket{\psi_{45}} \,.
\end{equation}

Then an $\alpha$-separable mixed state can be written in the following way
\begin{equation}
\hat{\rho}_{\alpha}=\sum_j p_j \ket{\psi_j}_{\alpha}\bra{\psi_j}_{\alpha} \,.
\end{equation}
We can then make a hierarchy for separability as follows: $\alpha$ precedes or equals $\beta$, if all sub-system in $\beta$ can be written as a subset or equal to a subsystem in $\alpha$, i.e.,
\begin{equation}
\alpha \preceq \beta \Leftrightarrow \forall s_i \in \beta ,  \; \exists s_{i'} \in \alpha : s_i \subseteq s_{i'} \,.
\end{equation}
If $\alpha$ has a finer partition than $\beta$ (i.e. $\alpha \preceq \beta$), then a state which is $\beta$ separable must also be $\alpha$ separable. 

\subsection{$\alpha$-Entropy of Entanglement and $\alpha$-Entanglement of Formation}

\subsubsection{$\alpha$-Von Neumann Entropy}

Let us define $\alpha$-von Neumann entropy in the following way:
\begin{equation}
S_{\alpha}(\hat{\rho})\equiv \frac{1}{2}\sum_{s_i\in \alpha}S(\hat{\rho}_{s_i}) \,.
\end{equation}

This is a measure that is well-defined for all states $\hat{\rho}$. Due to the fully additivity of $S$, $S_{\alpha}$ is also fully additive:
\begin{equation}
S_{\alpha}(\hat{\rho}_A\otimes \hat{\rho}_B)= S_{\alpha_A}(\hat{\rho}_A) +S_{\alpha_B}(\hat{\rho}_B) \,,
\end{equation}
where $\alpha_C$, $C \in \{A,B\}$, is the subset of $\alpha$ which includes the part that overlaps with the system $C$.

\subsubsection{$\alpha$-Entropy of Entanglement and Entanglement of Formation}

In the multipartite case, Szalay \cite{szalay2015multipartite} defined the $\alpha$-EoE of a pure state $\hat{\psi}$ to be:
\begin{equation}
E_{\alpha}(\hat{\psi})= S_{\alpha}(\hat{\psi}) \,.
\end{equation}
This measure can be interpreted as the sum of entanglement between the partitions. 

$\alpha$-EoF is defined as the convex-roof extension to $\alpha$-EoE\cite{szalay2015multipartite}:
\begin{equation}
\mathcal{E}_{\text{F},\alpha}(\hat{\rho}) \equiv \inf_{\hat{\rho}=\sum_j p_j \hat{\psi}_{j}}\left\{\sum_j p_j E_{\alpha}(\hat{\psi}_{j}) \right\} \,.
\end{equation}

$\alpha$-EoE and EoF are reliable $\alpha$-entanglement measure as they satisfy the same postulates as the bipartite case, except we must replace $s_1|s_2$ with $\alpha$. $\alpha$-entanglement measures also satisfy an extra postulate:
\begin{enumerate} \setcounter{enumi}{3}
	\item $E_\alpha$ and $\mathcal{E}_\alpha$ must satisfy the multipartite monotonicity;
\begin{gather}
{E}_\alpha(\hat{\rho})\leqslant {E}_\beta(\hat{\rho}), \; \forall \alpha \preceq \beta\,,
\\
\mathcal{E}_{\text{F},\alpha}(\hat{\rho})\leqslant \mathcal{E}_{\text{F},\beta}(\hat{\rho}), \; \forall \alpha \preceq \beta \,.
\end{gather}
\end{enumerate}
This means that an entanglement measure of finer partition is sensitive to more entanglement within the system, hence giving a larger value.
\subsection{N-Mode Entropy of Entanglement and N-Mode Entanglement of Formation}
\subsubsection{N-Mode Entropy of Entanglement}
In this section, we consider the finest partitioning of $\alpha$-entanglement measure and refer to it as the $N$-mode entropy of entanglement (NEoE) and formation (NEoF).  We notice that we have replaced the term partition with mode, as we are no longer interested in the entanglement between the partition that we assign, but with every mode that exists within the system, i.e. $N=M$. NEoE and NEoF satisfy the same postulates as $\alpha$-entanglement measures with the finest partitioning. 

For a pure $N$-mode state, $\hat{\psi}$, NEoE is defined in the following way:
\begin{equation}
\tilde{E}(\hat{\psi})= \tilde{S}(\hat{\psi}) \equiv \frac{1}{2}\sum_{n=1}^{\mathrm{N}} \; S\left[Tr_{\forall i\neq n}(\hat{\psi})\right] \,.
\end{equation} 
NEoE is the sum of all entanglement between each mode and the rest of the system. Due to multipartite monotonicity, this measure is also the largest pure entanglement measure out of the $\alpha$-EoF. For this reason, we refer to this quantity as the total of entanglement within the system.

To highlight a feature of this measure, let us consider a 2-mode entangled state, with a vacuum input in the 3rd mode. In this case, this measure will reduce down to the bipartite entanglement between the 2-mode entangled state, giving the total entanglement within the system. In comparison, a genuine tripartite entanglement measure \cite{adesso2006multipartite, schneeloch2020quantifying} would be zero in this case, as there is only bipartite entanglement.
\subsubsection{N-Mode Entanglement of Formation}
For an $N$-mode mixed state, $\hat{\rho}$, NEoF is defined in the following way:
\begin{equation}
\tilde{\mathcal{E}}_{\text{F}}(\hat{\rho}) \equiv \inf_{\hat{\rho}=\sum_j p_j \hat{\psi}_{j}}\left\{\sum_{j} {p_j} \tilde{E}(\hat{\psi}_{j}) \right\} \,.
\end{equation}

This measure quantifies the least expected total entanglement that is required to create the mixed state. Even though this is a well-defined measure it is hard to compute as there are infinite degrees of freedom for the set $\{p_j,\hat{\psi}_{j}\}$. In this paper, we limit ourselves to a Gaussian convex roof-extension to overcome this problem.
\section{$\alpha$-Gaussian Entanglement of Formation}

\subsection{Von Neumann Entropy and $\alpha$-EoE for Gaussian States}

For Gaussian states, the von Neumann entropy of a state, $\hat{\rho}_{\boldsymbol{\sigma},\vec{D}}$, is fully characterized by its covariance matrix. The von Neumann entropy of an $N$-mode Gaussian state with covariance matrix $\boldsymbol{\sigma}$ can be calculated as follows \cite{Agarwak1971}:
\begin{equation}
	S(\boldsymbol{\sigma})=\frac{1}{2}\sum_{n=1}^{N} \; h(\nu_n) \,,
\end{equation}
where $\nu_n$ is the $n$th symplectic eigenvalue of $\boldsymbol{\sigma}$, and
\begin{equation}
	h(x) \equiv \frac{x_+}{2}\log_2(\frac{x_+}{2})-\frac{x_-}{2}\log_2(\frac{x_-}{2}) \,,
\end{equation}
with $x_{\pm} \equiv x\pm 1$ an auxiliary function.

As the von Neumann entropy is fully characterized by its covariance matrix, $\alpha$-EoE of a pure state, $\hat{\psi}_{\boldsymbol{\pi},\vec{D}}$, is also fully characterized by its covariance matrix. The $\alpha$-EoE of a pure state with covariance matrix $\boldsymbol{\pi}$ is calculated as follows:
\begin{equation}
E_{{\alpha}}(\boldsymbol{\pi}) =\frac{1}{2} \sum_{s_i\in \alpha} S\left[\tr_{s_i}(\boldsymbol{\pi})\right] \,.
\end{equation}
A covariance matrix is pure if and only if $\text{det}(\boldsymbol{\pi})=1$.
%We now denote the covariance matrix of a pure Gaussian state as $\boldsymbol{\boldsymbol{\pi}}$. 

\subsection{$\alpha$-Gaussian Entanglement of Formation}

A mixed Gaussian state $\rho_{\boldsymbol{\sigma},\vec{D}}$ can be decomposed into a mixture of pure Gaussian states in the following way:
\begin{equation}
\hat{\rho}_{\boldsymbol{\sigma},\vec{D}}=\int  \mathrm{d} \boldsymbol{\pi} \mathrm{d}\vec{D}' \; \mu (\boldsymbol{\pi},\vec{D}') \hat{\psi}_{\boldsymbol{\pi},\vec{D}'} \,,
\end{equation}
where $\mu$ is the probability density of $\hat{\rho}_{\boldsymbol{\pi},\vec{D}'}$. In Ref.~\cite{wolf2004gaussian} the authors defined the bipartite Gaussian entanglement of formation (GEoF), and analogously we define the $\alpha$-GEoF as follows
\begin{equation}\label{GEoF}
\begin{aligned}
\mathcal{E}_{\text{G},{\alpha}}(\hat{\rho}_{\boldsymbol{\sigma},\vec{D}}) \equiv \underset{\mu}{\text{inf}}\left\{ \int\right. & \mathrm{d} \boldsymbol{\pi} \mathrm{d}\vec{D}' \; \mu (\boldsymbol{\pi},\vec{D}')E_\alpha(\boldsymbol{\pi}) 
\\
&  \left. | \hat{\rho}_{\boldsymbol{\sigma},\vec{D}}=\int  \mathrm{d} \boldsymbol{\pi} \mathrm{d}\vec{D}' \; \mu (\boldsymbol{\pi},\vec{D}') \hat{\psi}_{\boldsymbol{\pi},\vec{D}'} \right\} \, .
\end{aligned}
\end{equation}
This definition involves a minimization over infinite degrees of freedom, however by following the analysis of Ref.~\cite{wolf2004gaussian}, we find that Eqn. (\ref{GEoF}) reduces to the following expression
\begin{equation}\label{Wolf}
\mathcal{E}_{\text{G},{\alpha}}(\boldsymbol{\sigma})=\underset{\pi}{\text{inf}}\left\{ E_{\alpha}(\boldsymbol{\pi})| \boldsymbol{\sigma}=\boldsymbol{\pi} + \boldsymbol{\varphi} \right\} \,,
\end{equation}
where $\boldsymbol{\varphi}$ is a positive semi-definite matrix. 
%This NGEoF measure has the virtue of finding the optimum pure state (i.e. the pure state with the lowest N-mode entanglement entropy) to deterministically create the state via Gaussian LOCC.
This equation has finite free parameters, and therefore is a computable entanglement measure. In the App. \ref{ApAdd} we utilize Eqn. (\ref{Wolf}) to prove the additivity of NGEoF.
\section{N-mode Gaussian Entanglement of Formation for 3 mode states}

\subsection{Mixed 3-mode Gaussian states}

For mixed 3-modes states, we can utilize Gaussian local unitary operations (GLUO; refer to App. \ref{ApGLUO}) to reduce the state into the standard form \cite{adesso2006multipartite, ferraro2005gaussian}:
\begin{equation}
\boldsymbol{\sigma}_{\text{sf}}=
\begin{bmatrix}
a_1 &e_{1} &e_{3} &0 & 0 & e_{4}\\
e_{1}&  a_2 &e_{6} &0 &0 &  e_{7}\\
e_{3}&  e_{6} & a_{3} & 0 & e_8  &0 \\
0 &0 &0 & a_1  & e_{2} & e_{5}\\
0& 0 & e_{8}& e_{2} & a_2  & e_{9} \\
e_4& e_7&0 &e_{5}&  e_{9}  & a_{3} 
\end{bmatrix} \,.
\end{equation}

As GLUO do not affect the entanglement, we can reduce Eqn. (\ref{Wolf}) to the following:
 \begin{equation}\label{EqnGEoF}
\mathcal{E}_{\text{G},\alpha}(\boldsymbol{\sigma})=\underset{\boldsymbol{\pi}}{\text{inf}}\left\{ E(\boldsymbol{\pi})| \boldsymbol{\sigma}_{\text{sf}}-\boldsymbol{\pi}\geqslant 0 \right\} \,.
\end{equation}
In the next subsection, we fully parametrize $\boldsymbol{\pi}$.

\subsection{Pure 3-mode Gaussian States}

By utilizing GLUO, $\textbf{L}$, we can reduce any $\boldsymbol{\pi}$ to the standard form \cite{duan2000inseparability, simon2000peres}. For the 3-mode pure state, the standard form is \cite{adesso2005gaussian}:
\begin{equation}
\boldsymbol{\pi}_{\text{sf}}
=\begin{bmatrix}
a_1 & e_{12}^{+}& e_{13}^{+}&0  &0  &0\\
e_{12}^{+}& a_2& e_{23}^{+} & 0  &0  & 0\\
e_{13}^{+}& e_{23}^{+}& a_{3} & 0  &0 & 0 \\
0&0&0 & a_1  & e_{12}^{-} & e_{13}^{-}\\
0&0&0 &e_{12}^{-} & a_2  & e_{23}^{-} \\
0&0&0 & e_{13}^{-} & e_{23}^{-}  & a_{3} 
\end{bmatrix} \,,
\end{equation}
where $e_{ij}^{\pm}$ are a function of $a_1$, $a_2$ and $a_3$. For $\boldsymbol{\pi}_{\text{sf}}$ to be a physical covariance matrix the inequality	$|a_i-a_j|\leqslant a_k-1$ must be satisfied \cite{adesso2006multipartite}. All pure states can then be decomposed in the following way:
\begin{equation}
\begin{aligned}
\boldsymbol{\pi} = \boldsymbol{L} \boldsymbol{\pi}_{\text{sf}}(a_1,a_2,a_3) \boldsymbol{L}^T \,.
\end{aligned}
\end{equation}

In general, $\boldsymbol{L}$ has 9 free parameters, and hence the minimization of Eqn. (\ref{EqnGEoF}) can be conducted over 12 free parameters. A numerical code which scans over all possible $\boldsymbol{\pi}$ with finite size step for these 12 free parameters can be created. The condition $\left(\boldsymbol{\sigma}_{\text{sf},n}-\boldsymbol{\pi}_n \right) \geqslant 0 $, gives a finite range for all local squeezing operations, $a_1$, $a_2$ and $a_3$. The phase parameters are limited to $0 \geqslant \phi \geqslant 2 \pi $.

\subsection{q-p states}
In this section we consider a special class of states where we can reduce the number of free parameters to 6. In special cases, the standard form of the mixed state reduces to the following form:
\begin{equation}
\boldsymbol{\sigma}_{\text{qp}}=
\begin{bmatrix}
a_1 & e_{1}& e_{3}&0  &0  & 0\\
e_{1}& a_2& e_{6} & 0  &0  & 0\\
e_{3}& e_{6}&a_{3} & 0  & 0   &0 \\
0 &0 &0 & a_1 & e_{2} & e_{5}\\
0& 0& 0&e_{2} & a_2  & e_{9} \\
0& 0&0& e_{5} & e_{9}  & a_{3} 
\end{bmatrix} \,.
\end{equation}
We will refer to these states as q-p states. q-p states have the property that the $\hat{q}$-quadrature is completely uncorrelated to the $\hat{p}$-quadrature. This means that we can write the following:
\begin{equation}
\boldsymbol{\sigma}_{\text{qp}}=\boldsymbol{\sigma}_{\hat{q}} \oplus \boldsymbol{\sigma}_{\hat{p}} \,.
\end{equation}
Following the analysis in Ref.~\cite{wolf2004gaussian}, we prove in App. \ref{ApQp} that the optimum pure state to create such a state must also be a q-p state:
\begin{equation}
\boldsymbol{\pi}_{\text{qp}}=\boldsymbol{\pi}_{\hat{q}} \oplus \boldsymbol{\pi}_{\hat{p}} \,.
\end{equation}
These states only have 6 free parameters, which greatly reduces the complexity of the problem.

\subsection{Numerical Results}

Consider a two-mode Gaussian state, where one of the modes is thermal, while the others are vacuum. When a two-mode squeezer is applied to such a state, the bipartite GEoF is constant regardless of the number of photon in the thermal mode \cite{adesso2004extremal,adesso2005gaussian,giovannetti2014ultimate}. 
\begin{figure}\label{FigNGEoF}
{\includegraphics[width=\columnwidth]{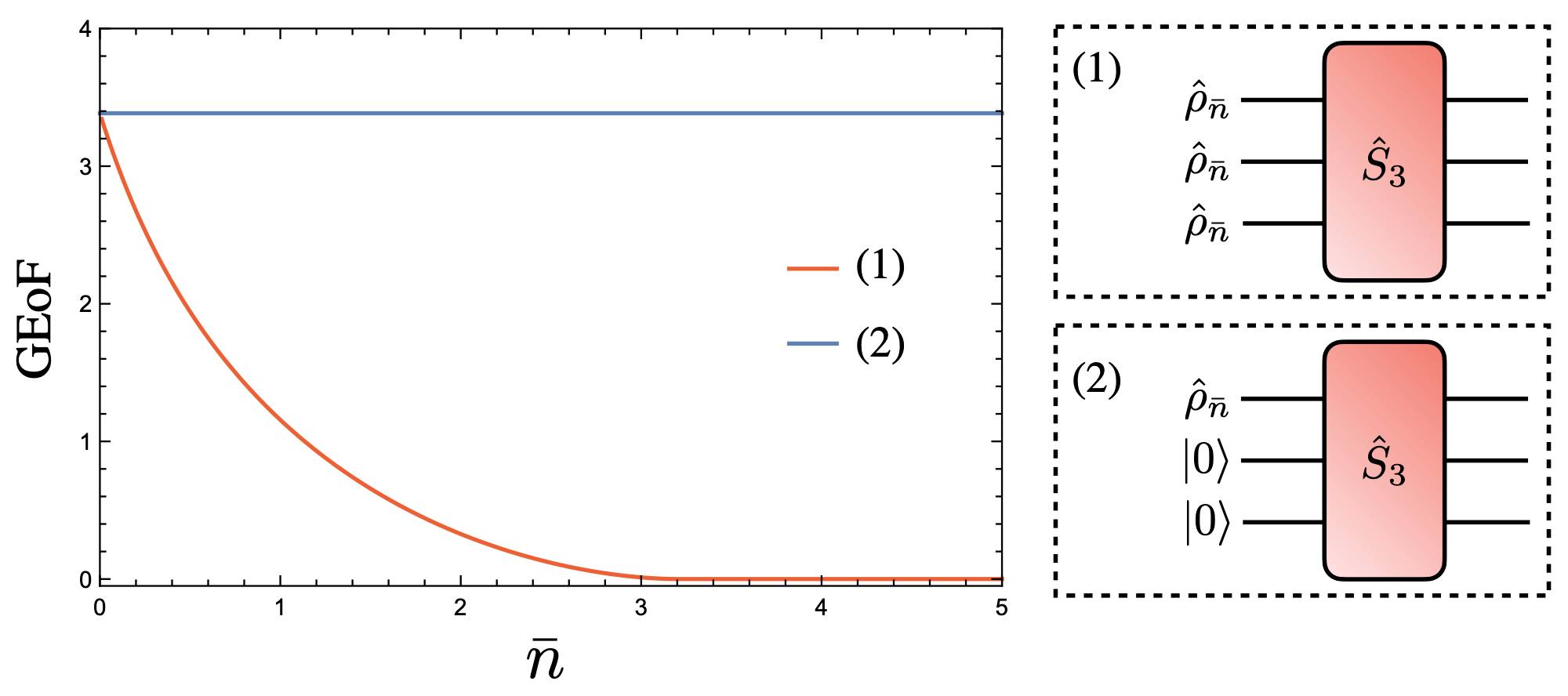}}
\caption{A plot demonstrating how NGEoF changes with input noise. The red line represents the NGEoF for a thermal input,$\hat{rho}_{\bar{n}}$, in all three modes. The blue line represents the NGEoF for thermal input in one mode, with all other modes being a vacuum input. }
\end{figure}

We aim to replicate an analogous result in the tripartite case, utilizing NGEoF. We consider a case where a 3-mode squeezer, $\hat{S}_3$ (details of this operation can be found in App. \ref{ApS3}), is applied to an input with with all three modes which are thermal with an average of $\bar{n}$ particles. Since the output state is a q-p state, we conduct an numerical optimization over q-p state to obtain Fig. 1. We repeat this process in the case where $\hat{S}_3$ is applied to an input with 1 mode which is thermal and the rest being a vacuum. NGEoF is constant in when there is only one thermal input, which is an analogous result to the 2-mode case. 

\section{Conclusion}

In this paper, we utilized the analysis of Ref.~\cite{szalay2015multipartite} on multipartite entanglement measures, and applied it to the Gaussian regime. We successfully demonstrated that the degree of freedom for this measure reduces down to a finite one for all Gaussian states. In particular, we were interested in a special case of $\alpha$-GEoF; NGEoF which quantifies the least expected total entanglement that is required to create the state. We proved that this measure is fully additive. In the last section we quantified its value for simple 3-mode Gaussian states and demonstrated that this measure displayed analogous features to the 2-mode case. 

An interesting future research direction would be to compare NGEoF and NEoF. For the 2-mode case, it has been proven that NGEoF and NEoF coincides with each other for Gaussian states \cite{akbari2015entanglement,Wilde.PRA.18}. It would be beneficial to prove that this can be extended to the N-mode case. Combined with the result that NGEoF is additive, as proven by this paper, the additivity of NEoF would then be proven for Gaussian states in general.

In this paper, we were particularly interested in NEoF, however there are other interesting $\alpha$-EoF measures. In particular, there is a $\alpha$-EoF which quantifies the genuine tripartite entanglement within a three mode system \cite{szalay2015multipartite, schneeloch2020quantifying}. We refer to this measure as a genuine tripartite entanglement measure, as it vanishes for all states which are not genuinely tripartite entangled states. A recent paper \cite{schneeloch2020quantifying} looked into finding an upper bound to this measure for the DV case. It would be interesting to apply this to the Gaussian regime, and investigate how useful the measure is.

%In the 2-mode case, the entanglement measure was a monotone over one parameter. This made the interpretation of entanglement very simple. For the 3-mode case, the entanglement comes in 3 forms; EPR, EPR-type and GhZW states \cite{ferraro2005gaussian}. It would be beneficial to develop an entanglement measure from a resource perspective, to allow the quantification of various entanglement resource.

\section{Acknowledgements}

This work is supported by the Australian Research Council (ARC) under the Centre of Excellence for Quantum Computation and Communication Technology (Grant No. CE170100012).

\onecolumngrid

\appendix

\section{Additivity of N-mode Gaussian Entanglement of Formation}\label{ApAdd}

\begin{proposition}
NGEoF for Gaussian states $\boldsymbol{\sigma}=\boldsymbol{\sigma}_{A}\oplus \boldsymbol{\sigma}_B$ is fully additive, i.e.,
\begin{equation}
\tilde{\mathcal{E}}_{\mathrm{G}}(\boldsymbol{\sigma}_A \oplus \boldsymbol{\sigma}_B)=\tilde{\mathcal{E}}_{\mathrm{G}}(\boldsymbol{\sigma}_A)+\tilde{\mathcal{E}}_{\mathrm{G}}(\boldsymbol{\sigma}_B) \,.
\end{equation}
where $\boldsymbol{\sigma}_A$ and $\boldsymbol{\sigma}_B$ is an $N$-mode and $N'$-mode Gaussian state, respectively.
\end{proposition}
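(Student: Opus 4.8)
The plan is to prove the two inequalities separately, using the Wolf-type representation in Eqn.~(\ref{Wolf}) specialized to the finest partition, so that the cost functional is $\tilde{E}(\boldsymbol{\pi})=\frac{1}{2}\sum_{n}S[\mathrm{Tr}_{\forall i\neq n}(\boldsymbol{\pi})]$, a sum over modes of the single-mode entropies. The structural observation I will exploit throughout is that each term depends only on the $2\times 2$ diagonal (single-mode) block of $\boldsymbol{\pi}$, through the local symplectic eigenvalue $\nu_n=\sqrt{\det\boldsymbol{\pi}^{(n)}}$, together with the fact that $h$ is monotonically increasing for $\nu\geq 1$.

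For the ``$\leq$'' (subadditivity) direction I would use a product ansatz. Given feasible pure $\boldsymbol{\pi}_A$ and $\boldsymbol{\pi}_B$ for $\boldsymbol{\sigma}_A$ and $\boldsymbol{\sigma}_B$, the direct sum $\boldsymbol{\pi}_A\oplus\boldsymbol{\pi}_B$ is again pure and satisfies $(\boldsymbol{\sigma}_A\oplus\boldsymbol{\sigma}_B)-(\boldsymbol{\pi}_A\oplus\boldsymbol{\pi}_B)=(\boldsymbol{\sigma}_A-\boldsymbol{\pi}_A)\oplus(\boldsymbol{\sigma}_B-\boldsymbol{\pi}_B)\geq 0$, hence is feasible for the joint state. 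Since the single-mode blocks of a direct sum are just the union of the single-mode blocks of the summands, $\tilde{E}(\boldsymbol{\pi}_A\oplus\boldsymbol{\pi}_B)=\tilde{E}(\boldsymbol{\pi}_A)+\tilde{E}(\boldsymbol{\pi}_B)$, which is precisely the full additivity of $S_\alpha$. Taking the infimum over $\boldsymbol{\pi}_A,\boldsymbol{\pi}_B$ then yields $\tilde{\mathcal{E}}_{\mathrm{G}}(\boldsymbol{\sigma}_A\oplus\boldsymbol{\sigma}_B)\leq\tilde{\mathcal{E}}_{\mathrm{G}}(\boldsymbol{\sigma}_A)+\tilde{\mathcal{E}}_{\mathrm{G}}(\boldsymbol{\sigma}_B)$.

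The substantive direction is ``$\geq$'' (superadditivity). Here I would start from an \emph{arbitrary} feasible pure $\boldsymbol{\pi}$ for $\boldsymbol{\sigma}_A\oplus\boldsymbol{\sigma}_B$ and decouple it, which avoids any question of attainment of the infimum. Writing $\boldsymbol{\pi}$ in $A|B$ block form, positivity of $(\boldsymbol{\sigma}_A\oplus\boldsymbol{\sigma}_B)-\boldsymbol{\pi}$ forces the diagonal blocks to obey $\boldsymbol{\sigma}_A\geq\boldsymbol{\pi}_{AA}$ and $\boldsymbol{\sigma}_B\geq\boldsymbol{\pi}_{BB}$, where $\boldsymbol{\pi}_{AA},\boldsymbol{\pi}_{BB}$ are the reduced (and in general mixed) covariance matrices. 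The difficulty is that these reductions are not pure and hence not directly admissible in Eqn.~(\ref{Wolf}). I would resolve this by a Williamson purification: writing $\boldsymbol{\pi}_{AA}=S_AD_AS_A^T$ with $S_A$ symplectic and $D_A$ the diagonal matrix of symplectic eigenvalues, define the pure state $\boldsymbol{\pi}_A':=S_AS_A^T$ obtained by resetting all symplectic eigenvalues to $1$. Then $\boldsymbol{\pi}_{AA}-\boldsymbol{\pi}_A'=S_A(D_A-I)S_A^T\geq 0$, so $\boldsymbol{\pi}_A'\leq\boldsymbol{\pi}_{AA}\leq\boldsymbol{\sigma}_A$ is feasible and pure. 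Crucially, $\boldsymbol{\pi}_A'\leq\boldsymbol{\pi}_{AA}$ induces the same ordering on each single-mode principal $2\times 2$ block, and since $0\leq X\leq Y$ implies $\det X\leq\det Y$ for such blocks, the local symplectic eigenvalues cannot increase; by monotonicity of $h$ this gives $\tilde{E}(\boldsymbol{\pi}_A')\leq\frac{1}{2}\sum_{n\in A}S[\mathrm{Tr}_{\forall i\neq n}(\boldsymbol{\pi})]$. Performing the identical construction on $B$ and adding yields $\tilde{\mathcal{E}}_{\mathrm{G}}(\boldsymbol{\sigma}_A)+\tilde{\mathcal{E}}_{\mathrm{G}}(\boldsymbol{\sigma}_B)\leq\tilde{E}(\boldsymbol{\pi}_A')+\tilde{E}(\boldsymbol{\pi}_B')\leq\tilde{E}(\boldsymbol{\pi})$, and taking the infimum over $\boldsymbol{\pi}$ completes the inequality.

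I expect the main obstacle to be exactly this purification step: verifying that the mixed reduced blocks can be replaced by genuine pure covariance matrices that remain dominated by $\boldsymbol{\sigma}_A,\boldsymbol{\sigma}_B$ while not increasing any local single-mode entropy. The three ingredients that make it go through are monotonicity of the positive-semidefinite order under passing to principal submatrices, determinant monotonicity on the $2\times 2$ local blocks (equivalently, monotonicity of the single-mode symplectic eigenvalue), and monotonicity of $h$. Everything else is bookkeeping, and because the argument treats $A$ and $B$ symmetrically it extends immediately to a direct sum of any finite number of Gaussian blocks.
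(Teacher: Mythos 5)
Your proposal is correct and follows essentially the same route as the paper: subadditivity via the product ansatz, and superadditivity by restricting a feasible global pure state to the $A$ and $B$ blocks and exploiting the fact that $\tilde{S}$ is a sum of single-mode entropies, each monotone under the positive-semidefinite order on the local $2\times 2$ blocks. The only difference is presentational: you construct the feasible pure states below the mixed marginals explicitly via the Williamson decomposition and take the infimum over $\boldsymbol{\pi}$ at the end rather than invoking an attained optimum $\boldsymbol{\pi}_o$, whereas the paper simply posits decompositions such as $\mathrm{Tr}_A(\boldsymbol{\pi})=\boldsymbol{\pi}_B+\boldsymbol{\varphi}_B$ and runs the same monotonicity arguments.
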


\begin{proof}
NGEoF is by construction sub-additive, i.e.,
\begin{equation}
\tilde{\mathcal{E}}_{G}(\boldsymbol{\sigma}_A \oplus \boldsymbol{\sigma}_B) \leqslant \tilde{\mathcal{E}}_{\mathrm{G}}(\boldsymbol{\sigma}_A)+\tilde{\mathcal{E}}_{\mathrm{G}}(\boldsymbol{\sigma}_B)\,,
\end{equation}
and thus its additivity can be shown by proving that NGEoF is super-additive too, i.e.,
\begin{equation}
\tilde{\mathcal{E}}_{\mathrm{G}}(\boldsymbol{\sigma}_A \oplus \boldsymbol{\sigma}_B) \geqslant \tilde{\mathcal{E}}_{\mathrm{G}}(\boldsymbol{\sigma}_A)+\tilde{\mathcal{E}}_{\mathrm{G}}(\boldsymbol{\sigma}_B) \,.
\label{superadditive}
\end{equation}
The Gaussian state $\boldsymbol{\sigma}=\boldsymbol{\sigma}_A \oplus \boldsymbol{\sigma}_B$ can decomposed as
\begin{equation}
\boldsymbol{\sigma}=\boldsymbol{\sigma}_A \oplus \boldsymbol{\sigma}_B=\boldsymbol{\pi}+\boldsymbol{\varphi} \,,
\label{decomposition}
\end{equation}
where $\boldsymbol{\pi}$ is a pure Gaussian state and $\boldsymbol{\varphi}$ is a positive semidefinite matrix. For any $\boldsymbol{\varphi} \geqslant 0$, the NGEoF for the states $\boldsymbol{\sigma}_A$ and $\boldsymbol{\sigma}_B$ satisfies
\begin{subequations}
\begin{gather}
\tilde{\mathcal{E}}_{\mathrm{G}}[\tr_B(\boldsymbol{\pi})] \geqslant \tilde{\mathcal{E}}_{\mathrm{G}}[\tr_B (\boldsymbol{\pi}+\boldsymbol{\varphi}) ]  = \tilde{\mathcal{E}}_{\mathrm{G}}(\boldsymbol{\sigma}_A) \,, \\
\tilde{\mathcal{E}}_{\mathrm{G}}[\tr_A (\boldsymbol{\pi})] \geqslant \tilde{\mathcal{E}}_{\mathrm{G}}[\tr_A (\boldsymbol{\pi}+\boldsymbol{\varphi})]   = \tilde{\mathcal{E}}_{\mathrm{G}}(\boldsymbol{\sigma}_B) \,,
\end{gather} 
\end{subequations}
so we have 
\begin{equation}
\tilde{\mathcal{E}}_{\mathrm{G}}[\tr_A(\boldsymbol{\pi})] + \tilde{\mathcal{E}}_{\mathrm{G}}[\tr_B (\boldsymbol{\pi})]  \geqslant \tilde{\mathcal{E}}_{\mathrm{G}}(\boldsymbol{\sigma}_A)+\tilde{\mathcal{E}}_{\mathrm{G}}(\boldsymbol{\sigma}_B)   \,.
\label{ineq1}
\end{equation}
The $N'$-mode state $\tr_A (\boldsymbol{\pi})$ and $N$-mode state $\tr_B (\boldsymbol{\pi})$ in the above inequality can also decomposed as follows
\begin{subequations}
\begin{gather}
\tr_A (\boldsymbol{\pi}) = \boldsymbol{\pi}_{B} + \boldsymbol{\varphi}_B \,, \\
\tr_B (\boldsymbol{\pi}) = \boldsymbol{\pi}_{A} + \boldsymbol{\varphi}_A \,,
\end{gather}
\label{eqredstates}
\end{subequations}
and again for arbitrary $\boldsymbol{\varphi}_A \geqslant 0$ and $\boldsymbol{\varphi}_B \geqslant 0$ we have
\begin{subequations}
\begin{gather}
\tilde{\mathcal{E}}_{\mathrm{G}} (\boldsymbol{\pi}_{A}) \geqslant \tilde{\mathcal{E}}_{\mathrm{G}} (\boldsymbol{\pi}_{A} + \boldsymbol{\varphi}_A)  = \tilde{\mathcal{E}}_{\mathrm{G}} [ \tr_B( \boldsymbol{\pi})] \,, \\
\tilde{\mathcal{E}}_{\mathrm{G}} (\boldsymbol{\pi}_{B}) \geqslant \tilde{\mathcal{E}}_{\mathrm{G}} (\boldsymbol{\pi}_{B} + \boldsymbol{\varphi}_B)  = \tilde{\mathcal{E}}_{\mathrm{G}} [ \tr_A (\boldsymbol{\pi})]\,,
\end{gather} 
\end{subequations}
which implies
\begin{equation}
\tilde{\mathcal{E}}_{\mathrm{G}} (\boldsymbol{\pi}_{A}) + \tilde{\mathcal{E}}_{\mathrm{G}} (\boldsymbol{\pi}_{B}) \geqslant \tilde{\mathcal{E}}_{\mathrm{G}} [ \tr_A (\boldsymbol{\pi})] + \tilde{\mathcal{E}}_{\mathrm{G}} [ \tr_B (\boldsymbol{\pi})]   \,.
\label{ineq2}
\end{equation}
Since $\boldsymbol{\pi}_{A}$ and $\boldsymbol{\pi}_{B}$ are pure states, their NGEoF is equivalent to their entropy of entanglement, i.e.,
\begin{subequations}
\begin{gather}
{}\tilde{\mathcal{E}}_{\mathrm{G}} (\boldsymbol{\pi}_{A}) = \tilde{E} (\boldsymbol{\pi}_{A}) = \tilde{S} (\boldsymbol{\pi}_{A})\,,
\\
{}\tilde{\mathcal{E}}_{\mathrm{G}} (\boldsymbol{\pi}_{B}) = \tilde{E} (\boldsymbol{\pi}_{B}) = \tilde{S} (\boldsymbol{\pi}_{B})\,,
\end{gather} 
\end{subequations}
and for arbitrary $\boldsymbol{\varphi}_A \geqslant 0$ and $\boldsymbol{\varphi}_B \geqslant 0$ we get
\begin{subequations}
\begin{gather}
\tilde{S} [\tr_B (\boldsymbol{\pi})] = \tilde{S}(\boldsymbol{\pi}_{A} + \boldsymbol{\varphi}_A ) \geqslant \tilde{S}(\boldsymbol{\pi}_{A}) \,, \\ 
\tilde{S} [\tr_A (\boldsymbol{\pi})]= \tilde{S}(\boldsymbol{\pi}_{B} + \boldsymbol{\varphi}_B ) \geqslant \tilde{S}(\boldsymbol{\pi}_{B}) \,
\end{gather} 
\end{subequations}
which combined with the inequality (\ref{ineq1}) and (\ref{ineq2}) turns into 
\begin{equation}
\tilde{S} [\tr_A (\boldsymbol{\pi})] + \tilde{S} [\tr_B(\boldsymbol{\pi})] \geqslant \tilde{\mathcal{E}}_{\mathrm{G}}(\boldsymbol{\sigma}_A)+\tilde{\mathcal{E}}_{\mathrm{G}}(\boldsymbol{\sigma}_B) \,.
\label{ineq3}
\end{equation}
We now notice that for any $(N+N')$-mode state $\boldsymbol{\sigma}$ we have
\begin{equation}
\tilde{S}(\boldsymbol{\sigma}_{s_1s_2})= \tilde{S}[\tr_{s_1}(\boldsymbol{\sigma}_{s_1s_2})]+\tilde{S}[\tr_{s_2}(\boldsymbol{\sigma}_{s_1s_2})]
\end{equation}
and thus the left-hand side of the inequality (\ref{ineq3}) becomes
\begin{equation}
\tilde{S} [\tr_B (\boldsymbol{\pi})] + \tilde{S} [\tr_A (\boldsymbol{\pi})] = \tilde{S}(\boldsymbol{\pi})= \tilde{E}(\boldsymbol{\pi}) \,.
\label{eq1}
\end{equation}

Given that the above equality is true for every $\boldsymbol{\pi}$, it should be also true for the ``optimal'' $\boldsymbol{\pi}_{o}$ that gives the NGEoF of the global state $\boldsymbol{\sigma}=\boldsymbol{\sigma}_A \oplus \boldsymbol{\sigma}_B$ in Eqn.~(\ref{decomposition}), i.e., 
\begin{equation}
\tilde{\mathcal{E}}_{\mathrm{G}}(\boldsymbol{\sigma}) = \tilde{\mathcal{E}}_{\mathrm{G}}(\boldsymbol{\sigma}_A \oplus \boldsymbol{\sigma}_B) = \tilde{E}(\boldsymbol{\pi}_{o})\, .
\label{eq2}
\end{equation}

Combining the above Eqns.~(\ref{eq1}) and (\ref{eq2}) with the inequality (\ref{ineq3}), we get

\begin{equation}
\tilde{\mathcal{E}}_{\mathrm{G}}(\boldsymbol{\sigma}_A \oplus \boldsymbol{\sigma}_B) \geqslant \tilde{\mathcal{E}}_{\mathrm{G}}(\boldsymbol{\sigma}_A) + \tilde{\mathcal{E}}_{\mathrm{G}}(\boldsymbol{\sigma}_B)\, ,
\end{equation}
which completes the proof.
\end{proof}

\section{Gaussian Local Unitary Operations}\label{ApGLUO}

In this section we introduce a useful class of operations Gaussian local unitary operations (GLUO). GLUO are operations which do not increase or decrease the amount of entanglement. By definition, these operations are a subset of LOCC (here, locality is defined with respect to each mode), which means that they cannot increase the entanglement. As these operations are locally reversible (i.e. unitary in terms of the Heisenberg picture), they cannot decrease the entanglement.

\begin{figure}[t]\label{FigGLUO}
\includegraphics[width=0.6\textwidth]{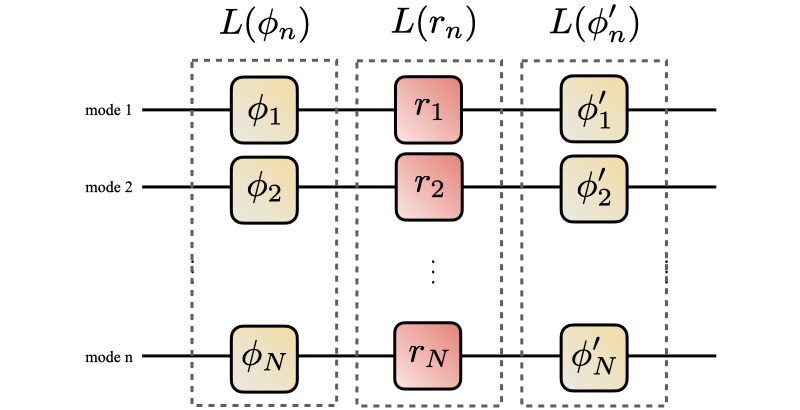}
\caption{A schematic decomposition of all GLUO operations.}
\label{fig: Unitary}
\end{figure}

We introduce the GLUO of a $N$-mode state as follows:
\begin{equation}
\boldsymbol{L}\equiv \bigoplus_{n=1}^N \; \boldsymbol{L}_n \,,
\end{equation}
where $\boldsymbol{L}_n$ is the GLUO in each mode. Each GLUO can be decomposed through the Bloch Messiah decomposition \cite{bloch1962canonical, Braunstein2005Squeezing} as
\begin{equation}
\boldsymbol{L}_n= \boldsymbol{L}(\phi_n')\boldsymbol{L}(r_n)\boldsymbol{L}(\phi_n) \,,
\end{equation}
where
\begin{equation}
\boldsymbol{L}(\phi) \equiv \begin{bmatrix}
\cos(\phi) & \sin(\phi) \\
\sin(\phi)  & \cos(\phi)  \\
\end{bmatrix} \,,
\end{equation}
corresponds to phase rotations, and
\begin{equation}
\boldsymbol{L}(r) \equiv \begin{bmatrix}
\cosh(r) & 0 \\
0 & \sinh(r)  \\
\end{bmatrix} \,,
\end{equation}
corresponds squeezing operations. A schematic diagram of this decomposition for GLUO is shown in Fig \ref{FigGLUO}.

\section{Optimization of NGEoF for q-p states}\label{ApQp}
\begin{proposition}
Consider a q-p state $\boldsymbol{\sigma}_{\text{qp}}$. For every pure state, $\boldsymbol{\pi} \leqslant \boldsymbol{\sigma}$, there exists a q-p pure state $\boldsymbol{\pi}_{\text{qp}}'\leqslant \boldsymbol{\sigma}$ which satisfies the following:
\begin{equation}
\tilde{E}(\boldsymbol{\pi})\geqslant \tilde{E}(\boldsymbol{\pi}'_{\text{qp}})
\end{equation}
\end{proposition}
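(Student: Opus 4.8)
The plan is to construct, for an arbitrary pure $\boldsymbol{\pi}\leqslant\boldsymbol{\sigma}_{\text{qp}}$, one explicit q-p pure state and then to verify directly the two properties it must have: that it lies below $\boldsymbol{\sigma}_{\text{qp}}$, and that its NEoE does not exceed $\tilde{E}(\boldsymbol{\pi})$. Working in units where the vacuum covariance matrix is the identity and ordering the quadratures as $(\hat{q}_1,\dots,\hat{q}_N,\hat{p}_1,\dots,\hat{p}_N)$, I would write $\boldsymbol{\pi}$ in block form
\begin{equation}
\boldsymbol{\pi}=\begin{pmatrix}\boldsymbol{A} & \boldsymbol{C}\\ \boldsymbol{C}^{T} & \boldsymbol{B}\end{pmatrix},
\end{equation}
where $\boldsymbol{A}$, $\boldsymbol{B}$, $\boldsymbol{C}$ are the $\hat{q}\hat{q}$-, $\hat{p}\hat{p}$-, and $\hat{q}\hat{p}$-blocks respectively. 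Since a pure q-p state necessarily has the form $\boldsymbol{\pi}_{\hat{q}}\oplus\boldsymbol{\pi}_{\hat{q}}^{-1}$ (so that all its symplectic eigenvalues equal one), the natural candidate is $\boldsymbol{\pi}'_{\text{qp}}\equiv\boldsymbol{A}\oplus\boldsymbol{A}^{-1}$, which is q-p and pure by construction; only feasibility and the entropy inequality remain.

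First I would establish $\boldsymbol{\pi}'_{\text{qp}}\leqslant\boldsymbol{\sigma}_{\text{qp}}$. As both matrices are block-diagonal in $\hat{q}$/$\hat{p}$, this splits into $\boldsymbol{A}\leqslant\boldsymbol{\sigma}_{\hat{q}}$ and $\boldsymbol{A}^{-1}\leqslant\boldsymbol{\sigma}_{\hat{p}}$. The first is immediate by taking the $\hat{q}\hat{q}$ diagonal block of $\boldsymbol{\sigma}_{\text{qp}}-\boldsymbol{\pi}\geqslant0$. The second is the subtle one: inverting the operator inequality $\boldsymbol{\pi}\leqslant\boldsymbol{\sigma}_{\text{qp}}$ (antitonicity of the matrix inverse) gives $\boldsymbol{\sigma}_{\text{qp}}^{-1}\leqslant\boldsymbol{\pi}^{-1}$, while the purity condition $\boldsymbol{\pi}\boldsymbol{\Omega}\boldsymbol{\pi}=\boldsymbol{\Omega}$ (equivalently $\boldsymbol{\pi}^{-1}=-\boldsymbol{\Omega}\boldsymbol{\pi}\boldsymbol{\Omega}$, with $\boldsymbol{\Omega}$ the symplectic form) forces the $\hat{p}\hat{p}$-block of $\boldsymbol{\pi}^{-1}$ to equal $\boldsymbol{A}$. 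Reading off that block yields $\boldsymbol{\sigma}_{\hat{p}}^{-1}\leqslant\boldsymbol{A}$, i.e.\ $\boldsymbol{A}^{-1}\leqslant\boldsymbol{\sigma}_{\hat{p}}$, so $\boldsymbol{\pi}'_{\text{qp}}$ is admissible.

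Next I would compare the entanglement mode by mode. Since $\tilde{E}$ is a sum of single-mode reduced entropies, and each such entropy is the monotonically increasing function $h$ of that mode's symplectic eigenvalue $\nu_{n}=\sqrt{\det\boldsymbol{m}_{n}}$ (with $\boldsymbol{m}_{n}$ the reduced single-mode covariance matrix, and all $\nu_n\geqslant1$), it suffices to show $\nu'_{n}\leqslant\nu_{n}$ for every mode, where $\nu_{n}=\sqrt{A_{nn}B_{nn}-C_{nn}^{2}}$ and $\nu'_{n}=\sqrt{A_{nn}(A^{-1})_{nn}}$. The same pure-state inverse structure supplies the Schur-complement identity $\boldsymbol{B}=\boldsymbol{A}^{-1}+\boldsymbol{C}^{T}\boldsymbol{A}^{-1}\boldsymbol{C}$, so $B_{nn}-(A^{-1})_{nn}=(\boldsymbol{C}^{T}\boldsymbol{A}^{-1}\boldsymbol{C})_{nn}$. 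Applying the Cauchy-Schwarz inequality in the $\boldsymbol{A}$-metric to the $n$th column $\boldsymbol{c}_{n}$ of $\boldsymbol{C}$ gives $C_{nn}^{2}=(\boldsymbol{e}_{n}^{T}\boldsymbol{c}_{n})^{2}\leqslant A_{nn}(\boldsymbol{c}_{n}^{T}\boldsymbol{A}^{-1}\boldsymbol{c}_{n})=A_{nn}(\boldsymbol{C}^{T}\boldsymbol{A}^{-1}\boldsymbol{C})_{nn}$, hence $(A^{-1})_{nn}\leqslant B_{nn}-C_{nn}^{2}/A_{nn}$ and therefore $(\nu'_{n})^{2}=A_{nn}(A^{-1})_{nn}\leqslant A_{nn}B_{nn}-C_{nn}^{2}=\nu_{n}^{2}$. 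Monotonicity of $h$ then gives $\tilde{E}(\boldsymbol{\pi}'_{\text{qp}})\leqslant\tilde{E}(\boldsymbol{\pi})$, which is the claim.

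I expect the two genuinely non-trivial steps to be the feasibility bound $\boldsymbol{A}^{-1}\leqslant\boldsymbol{\sigma}_{\hat{p}}$ and the per-mode inequality $\nu'_{n}\leqslant\nu_{n}$. Neither is available from the naive diagonal-block bounds alone; both are unlocked only by exploiting purity through the relation $\boldsymbol{\pi}^{-1}=-\boldsymbol{\Omega}\boldsymbol{\pi}\boldsymbol{\Omega}$, which pins the $\hat{p}\hat{p}$-block of $\boldsymbol{\pi}^{-1}$ to $\boldsymbol{A}$, combined with the Cauchy-Schwarz estimate. Everything else is bookkeeping within the $\hat{q}/\hat{p}$ block decomposition.
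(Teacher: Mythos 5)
Your proof is correct and follows essentially the same route as the paper: your candidate $\boldsymbol{A}\oplus\boldsymbol{A}^{-1}$ is exactly the paper's $\boldsymbol{\pi}(\boldsymbol{X},\boldsymbol{0})$, and your two steps (feasibility of the $\hat{q}$--$\hat{p}$ decoupled pure state below $\boldsymbol{\sigma}_{\text{qp}}$, then the per-mode determinant comparison $\det\boldsymbol{m}'_n\leqslant\det\boldsymbol{m}_n$ combined with monotonicity of $h$) mirror Eqns.~(\ref{Eqnpq1}) and (\ref{EqDetXY}). The only difference is that the paper imports both of those facts from Ref.~\cite{wolf2004gaussian}, whereas you supply self-contained derivations via the inverse inequality with $\boldsymbol{\pi}^{-1}=-\boldsymbol{\Omega}\boldsymbol{\pi}\boldsymbol{\Omega}$ and the Cauchy--Schwarz bound $C_{nn}^2\leqslant A_{nn}(\boldsymbol{C}^T\boldsymbol{A}^{-1}\boldsymbol{C})_{nn}$, both of which check out.
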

\begin{proof} Any Gaussian pure state $\boldsymbol{\pi}$ can be written in the following way \cite{wolf2004gaussian}:
\begin{equation}
 \boldsymbol{\pi}(\boldsymbol{X},\boldsymbol{Y})=\begin{bmatrix}
\boldsymbol{X} & \boldsymbol{XY} \\
\boldsymbol{YX} & \boldsymbol{YXY}+\boldsymbol{X}^{-1}\\
\end{bmatrix} \;,
\end{equation}
where $\boldsymbol{X}>0$ and $\boldsymbol{Y}$ are real symmetric $N \times N$ matrix with $\boldsymbol{X}>0$. For q-p states, $\boldsymbol{Y}=0$. For every $\boldsymbol{\sigma}_{\text{qp}}\geqslant \boldsymbol{\pi}(\boldsymbol{X},\boldsymbol{Y})$,  we have the following \cite{wolf2004gaussian}:
\begin{equation}\label{Eqnpq1}
 \boldsymbol{\sigma}_{\text{qp}} \geqslant \boldsymbol{\pi}(\boldsymbol{X},\boldsymbol{Y}) \Rightarrow  \boldsymbol{\sigma}_{\text{qp}} \geqslant \boldsymbol{\pi}(\boldsymbol{X},0)
\end{equation}
We also have that the determinant of the single mode $\boldsymbol{\pi}_n(\boldsymbol{X},\boldsymbol{Y})$ is always larger than $\boldsymbol{\pi}_n(\boldsymbol{X},\boldsymbol{0})$:
\begin{equation}\label{EqDetXY}
\det\left[\boldsymbol{\pi}_n(\boldsymbol{X},\boldsymbol{0})\right]\leqslant \det\left[\boldsymbol{\pi}_n(\boldsymbol{X},\boldsymbol{Y})\right]
\end{equation}
The entropy of a single mode state is computed to be;
\begin{equation}\label{EqnSn}
S(\boldsymbol{\sigma}_n)=h\left[\sqrt{\det(\boldsymbol{\sigma}_n)}\right] \, ,
\end{equation}
As this is true for every mode, combining Eqn. (\ref{EqnSn}) and (\ref{EqDetXY}) gives the following:
\begin{equation}\label{Eqnpq2}
\tilde{E}[\boldsymbol{\pi}_n(\boldsymbol{X},\boldsymbol{0})] \leqslant \tilde{E}\left[\boldsymbol{\pi}_n[\boldsymbol{X},\boldsymbol{Y})\right]
\end{equation}
Eqn. (\ref{Eqnpq1}) and (\ref{Eqnpq2}) completes the proof.
\end{proof}
\section{Symmetric 3-Mode Squeezing Operation}\label{ApS3}

The Heisenberg evolution of a three mode squeezing operation is as follows \cite{Wu2005Continuous}:
\begin{equation}
\hat{S}_3^{\dag} \hat{a}_i\hat{S}_3= \cosh(r)\hat{a}_i+\sinh(r)\left[-\frac{1}{3}\hat{a}_i^{\dag}+\frac{2}{3}(\hat{a}_j^{\dag}+\hat{a}_k^{\dag})\right] \,.
\end{equation} 
The covariance matrix representation of a three mode squeezer is given by
\begin{equation}
\boldsymbol{{S}}_3(r_3)=\begin{bmatrix}
{\alpha}_+& {\beta}_+& {\beta}_+&0&0&0\\
{\beta}_+& {\alpha}_+& {\beta}_+&0&0&0\\
{\beta}_+& {\beta}_+& {\alpha}_+&0&0&0\\
0&0&0&{\alpha}_-& {\beta}_-& {\beta}_-\\
0&0&0&{\beta}_-& {\alpha}_-& {\beta}_-\\
0&0&0&{\beta}_-& {\beta}_-& {\alpha}_-\\
\end{bmatrix} \,,
\end{equation}
where we have defined the following:
\begin{equation}
\alpha_{\pm} \equiv 
\cosh(r_3)\mp\frac{\sinh(r_3)}{3}\,, \quad
\beta_{\pm} \equiv
\pm\frac{2\sinh(r_3)}{3}\,.
\end{equation}

We obtain the GhZ/W state \cite{adesso2006multipartite} when we apply this operator onto the vacuum state. In the standard form \cite{adesso2005gaussian}, this state can be written in the following way:
\begin{equation}
\boldsymbol{\pi}_{GhZ/W,\text{sf}}(r_3)\equiv(\boldsymbol{S}_3 \boldsymbol{S}_3^T)_{\text{sf}}= \;\begin{bmatrix}
{\alpha}'& {\beta}'_+& {\beta}'_+&0&0&0\\
{\beta}'_+& {\alpha}'& {\beta}'_+&0&0&0\\
{\beta}'_+& {\beta}'_+& {\alpha}'&0&0&0\\
0&0&0&{\alpha}'& {\beta}'_-& {\beta}'_-\\
0&0&0&{\beta}'_-& {\alpha}'& {\beta}'_-\\
0&0&0&{\beta}'_-& {\beta}'_-& {\alpha}'\\
\end{bmatrix} \,,
\end{equation}
where
\begin{equation}
\alpha' \equiv  \frac{1}{3}\sqrt{9\cosh(2r_3)^2-\sinh(2r_3)^2} \,, \quad 
\beta_{\pm} \equiv \pm \frac{|2\sinh(2r_3)|}{3} \sqrt{\frac{3\cosh(2r_3)\pm|\sinh(2r_3)|}{3\cosh(2r_3)\mp|\sinh(2r_3)|}} \,.
\end{equation}
The Bloch-Messiah decomposition \cite{bloch1962canonical, Braunstein2005Squeezing} of this operator can be found in a straightforward fashion by setting the local squeezers to be equal with $2 \pi/3$ phase differences.

\twocolumngrid
\bibliography{bibliography}
\end{document}